\newtheorem{proposition}{Proposition}
\newtheorem*{corollary}{Corollary}
\theoremstyle{remark}
\newtheorem{remark}{Remark}
\renewcommand{\c}{\cdot}
\newcommand{\nn}{\nonumber}
\newcommand{\bea}{\begin{eqnarray}}
\newcommand{\eea}{\end{eqnarray}}
\newcommand{\ba}{\begin{array}}
\newcommand{\ea}{\end{array}}
\newcommand{\C}{{{\mathbb C}}}
\newcommand{\al}{\alpha}
\newcommand{\ph}{\varphi}
\renewcommand{\th}{\theta}
\newcommand{\be}{\begin{equation}}
\newcommand{\ee}{\end{equation}}
\newcommand{\bn}{\begin{enumerate}}
\newcommand{\en}{\end{enumerate}}
\newcommand{\fd}{f^\dagger}
\newcommand{\zd}{z^\dagger}
\newcommand{\cpn}{$\C P^{N-1}$}
\newcommand{\bp}{\bar{\partial}}
\newcommand{\p}{\partial}
\newcommand{\bxi}{\bar{\xi}}
\newcommand{\tr}{\mathrm{tr}}
\newcommand{\half}{\tfrac{1}{2}}
\newcommand{\bi}{\begin{itemize}}
\newcommand{\ei}{\end{itemize}}
\begin{document}


\title{The $\mathfrak{su}(2)$ spin $s$ representations via $\mathbb{C}P^{2s}$ sigma
models}
\author{P.P. Goldstein}
\affiliation{National Centre for Nuclear Research, Pasteur St. 7,
02-093 Warsaw, Poland.} \email{piotr.goldstein@ncbj.gov.pl}
\author{A.M. Grundland}
\affiliation{Centre de Recherches Math\'ematiques, Universit\'e de
Montr\'eal, CP 6128, Succ. Centre-Ville, Montr\'eal (QC) H3C 3J7,
Canada,\\ and Department of Mathematics and Computer Science,
Universit\'e du Qu\'ebec, CP 500, Trois-Rivi\`eres, (QC) G9A 5H7,
Canada.} \email{grundlan@CRM.UMontreal.CA}
\author{A.M. Escobar Ruiz}
\affiliation{Centre de Recherches Math\'ematiques, Universit\'e de
Montr\'eal, CP 6128, Succ. Centre-Ville, Montr\'eal (QC) H3C 3J7,
Canada.} \email{escobarr@CRM.UMontreal.CA}
\thanks{The research of AMG and AMER was supported by the Natural Sciences and Engineering Research Council of Canada
operating grant of one of the authors (AMG).}


\begin{abstract}
We establish and analyze a new relationship between the matrices
describing an arbitrary component of a spin $s$, where $2s\in
\mathbb{Z}^+$, and the matrices of $\mathbb{C}P^{2s}$
two-dimensional Euclidean sigma models. The spin matrices are
constructed from the rank-1 Hermitian projectors of the sigma
models or from the antihermitian immersion functions of their
soliton surfaces in the $\mathfrak{su}(2s+1)$ algebra. For the
spin matrices which can be represented as a linear combination of
the generalized Pauli matrices, we find the dynamics equation
satisfied by its coefficients. The equation proves to be identical
to the stationary equation of a two-dimensional Heisenberg model.
We show that the same holds for the matrices congruent to the
generalized Pauli ones by any coordinate-independent unitary
linear transformation. These properties open the possibility for
new interpretations of the spins and also for application of the
methods known from the theory of sigma models to the situations
described by the Heisenberg model, from statistical mechanics to
quantum computing.\\
PACS{75.10.Hk, 02.30.Ik, 05.90.+m, 11.10.Lm, 67.57.Lm}
\end{abstract}
\keywords{sigma model, projector formalism, spin matrices, Lie
algebra, Heisenberg model}

\maketitle
\section{Introduction}
\label{intro} The simplest $\mathbb{C}P^{2s}$ sigma models,
$2s\in\mathbb{Z}^+$, were invented by Gell-Mann in 1960 \cite{GM}
and later developed by Callan et al. \cite{CWZ2,CWZ1} to explain
pion lifetime. They have found many other applications since
then, such as
\cite{VisPar,Seiberg,Zhit,Rajaraman,Dav,Lan}. In this paper we
consider another possible application of these models. Namely, an
appropriate combination of rank-1 projectors, which are the basic
building blocks of the models, may be a representation of the
$\mathfrak{su}(2)$  algebra in $\mathbb{C}^N$ and thus describe
spin (or isospin) matrices corresponding to the maximal component
of a spin $s=(N-1)/2$. Our objective is to analyze the conditions
which make it possible and suggest its applications.

In Section 2 we summarize the basic information about the
$\mathbb{C}P^{2s}$ models. In Section 3 we show that some linear
combination of the projectors behaves like a component of a spin
vector and is always congruent to the generalized Pauli matrix of
the appropriate size. On the other hand, we present a
counterexample, demonstrating that these combinations of
projectors  cannot always be combinations of the generalized Pauli
matrices. Finally we show that those matrices which actually are
such combinations (or are congruent to them by a
coordinate-independent unitary transformation) satisfy a
propagation equation of a stationary two-dimensional (2D)
Heisenberg model. In Section 4 we recall the results of \cite{CG}
which state that projectors mapping on the directions of the
Veronese vectors always yield spins when combined in the above
way. Possible applications of these results, which include quantum
computing, are mentioned in Section 5.
\section{Basics of the $\mathbb{C}P^{2s}$ sigma models}
\label{sec:1}
The main feature of the nonlinear sigma models in field theory is
that the transformed field admits a very simple effective
Lagrangian density, defined in $\mathbb{C}$, assuming values in
some manifold \cite{WZ}
\be
\mathcal{L}=\partial_\mu \phi^T \partial^\mu \phi, \label{Lphi},
\ee
with appropriate algebraic constraints on the field $\phi$. This
way, the complexity of their dynamics relies on the geometry of
the target space. Such an approach has found many applications,
such as \cite{CWZ2,CWZ1,VisPar,Seiberg,Zhit,Rajaraman,Dav,Lan}.
Even for the very simple $\mathbb{C}P^{N-1}$ models, where the
target is a single complex $N-dimensional$ sphere,
their properties are highly nontrivial
\cite{Din,DHZ,WZ,GSZ,WZ1,GY2,GG-rec,GG-inv,GGP,GG-stack}.
These models are the starting point of our research.

As a rule, the domain is parametrized in terms of the complex
variables ${\xi=x+iy} \in\mathbb{C}$, while the target manifold
variables are either vectors  $z$ of a complex unit sphere,
$\zd\!\c\! z=1$, embedded in $\mathbb{C}^N$, or the Grassmannian
homogeneous variables $f$ such that $z=f/(\fd\!\c\! f)^{1/2}$ (the
dagger superscript denotes the Hermitian conjugation). Another
convenient choice of the variables may be projectors $P\in
GL^N(\mathbb{C})$ mapping on the directions of $z$ (and $f$),
namely
\be\label{projectors}
P=z\otimes\zd=\frac{f\otimes\fd}{\fd\c f},
\ee
where $\otimes$ is the tensor product. The last description proves
to be simple and fruitful. The action corresponding to the
Lagrangian density \eqref{Lphi} integrated over the Riemann sphere
$S^2$ (with a constant factor for convenience) becomes
\be\label{action}
\mathcal{A}= \half\int_{S^2} d\xi d\bxi\,\tr\left(\p P\c \bp
P\right),
\ee
under the idempotency condition
\be\label{P-constraint}
P^2-P=0,\quad P^\dagger=P,\quad \tr\,P = 1,
\ee
where $\p$ and $\bp$ are the derivatives with respect to $\xi$ and
$\bxi$ respectively. The Euler-Lagrange (E-L) equations are simply
\cite{WZ}
\be\label{E-L}
[P ,~ \p\bp      P]=0,
\ee
where the square bracket denotes the commutator.
Their solutions satisfying the condition \eqref{P-constraint} may
be obtained by a recurrence procedure. Namely, it was proven in
\cite{Din,DHZ} that all solutions corresponding to the finite action \eqref{action},
expressed in terms of the homogeneous variables $f_k$,
result from a holomorphic solution $f_0$ (any) by
consecutive application of a raising operator
\be\label{f-rec}
f_{k+1}=\mathcal{P}^+ f_k=\left(\mathbb{I}_{2s+1} -
\frac{f_k\otimes \fd_k}{\fd_k\c f_k}\right)\c\p f_k,~~k=0,...,2s,
\ee
where $\mathbb{I}_N$ is the $N\times N$ unit matrix.

The last nontrivial vector $f_{k+1}$ is the antiholomorphic
solution $f_{2s}$ (the action of $\mathcal{P}^+$ on an
antiholomorphic vector obviously yields a zero vector). Similarly
all solutions can be obtained from an antiholomorphic solution by
an analogous lowering operator $\mathcal{P}^-$.

These raising and lowering operators have their counterparts for
projectors \cite{GG-rec,GG-stack}, namely for $k=0,...,2s$
\be\label{rec-proj}
P_{k+1}\!=\! \Pi^+(P_k)= t_{k+1} \p P_k\c P_k\c \bp P_k,\quad P_{k-1}\!=\! \Pi^-(P_k)= t_{k} \bp P_k\c P_k\c \p P_k,
\ee
where the real scalars
\be
t_j(\xi,\bxi)=[\tr(\bp P_j\!\c\! P_j\!\c\! \p P_j)]^{-1}=[\tr(\p
P_{j-1}\!\c\! P_{j-1}\!\c\! \bp P_{j-1})]^{-1}
\ee
for $j=1,...,2s,\text{ while }t_{0}=t_{2s+1}=0$.

The $\mathbb{C}P^{2s}$ sigma models with finite action are
completely integrable \cite{Din}. Furthermore, the E-L equations
can be written in the form of a conservation law
\be\label{cons-law}
\p\,[\bp P,P]+\bp\,[\p P,P]=0,
\ee
which shows that a total differential can be constructed out of
the commutators \eqref{cons-law}. The integral of the total
differential over any contour $\gamma$ (with the constant of
integration ensuring tracelessness) is an antihermitian immersion
function $X_k(\xi,\bxi)$ of a two-dimensional (2D) surface in a
$\mathfrak{su}(N)$ Lie algebra. Moreover, these immersion
functions can explicitly be expressed as linear combinations of
the projectors $P_k,~k=0,...,2s$, \cite{GY2}. The definition and
the explicit form of the immersion functions are
\be\label{XfromP}
X_k=i\int_{\gamma}\left(-[\p P,P]d\xi+[\bp
P,P]d\bar{\xi}\right)=-i\left(P_k+2\sum\limits_{j=0}^{k-1}P_j\right)+\frac{i(1+2k)}{2s+1}\mathbb{I}_{2s+1}.
\ee
They describe 2D soliton surfaces whose conditions of immersion
are the E-L equations \eqref{E-L}. These surfaces have no common
points, except for $\mathbb{C}P^1$, where the only two surfaces
$X_0$ and $X_1$ coincide \cite{GG-conf}.

 The immersion function matrices $X_k$ span a Cartan subalgebra of the $\mathfrak{su}(N)$ algebra.

From the fact that the $P_k$ are mutually orthogonal projectors of
rank 1, it follows that each of them has only one nonzero
eigenvalue equal to 1 and together they constitute a partition of
unity. Hence an appropriate linear combination of these projectors
may have any required set of eigenvalues. A matrix corresponding
to a component of a spin vector, say $S^z$, has eigenvalues $-s,
-s+1,...,s$ with $s$ being a positive integer or half-integer
$(2s\in\mathbb{Z}^+)$, with the exception of the trivial case
$s=0$. This way, a spin matrix may be constructed from
$\mathbb{C}P^{2s}$ projectors as
\be\label{SpinFromP}
S^z = \sum\limits_{k=0}^{2s}(k-s)P_k.
\ee
Up to a constant factor, this combination is equal to the sum of
the immersion functions of the disjoint soliton surfaces $X_k$
\be
S^z= (-i/2)\sum_{k=0}^{2s} X_k,
\ee
which can provide an interpretation of the spin as a composite
phenomenon.

\noindent We will refer to the matrices which may describe
components of spins as \textit{spin matrices}.
\section{Spin matrices}
\label{sec:2}
Before proceeding to our main results, we summarize some
properties of the spin matrices associated with the $CP^{2s}$
models expressed in terms of rank-1 Hermitian projectors $P_k$

\subsection{Properties of the spin matrices}

Note that all the discussed properties of the spin matrices $S^z$
will follow from the defining relation \eqref{SpinFromP} and from
the fact that the Hermitian matrices $P_k$ map onto
one-dimensional subspaces of $\mathbb{C}^{N}$.

{\bf\emph{Property 1}}. If a Hermitian rank-1 projector $P_k$ maps
onto a one-dimensional subspace of $\mathbb{C}^{N}, ~N=2s+1$, then
the trace and the rank of the spin matrix $S^z$ \eqref{SpinFromP}
are
\begin{equation}\label{bigproperty1}
\qquad  \mbox{tr}\,S^z \ = \  0 \ , \qquad \mbox{rank}\,S^z \ = \
\Bigg\{\begin{array}{ll}
    N \qquad \quad \ \ \, \mbox{for}\ N\,=\,2n \ ,  \\
    N-1 \qquad \mbox{for}\ N\,=\,2n+1 \  ,
    \end{array} \qquad n\in\mathbb{Z}^+ \ .
\end{equation}

\noindent The fact that, for even $N$, the ranks of the matrices
$S^z$ are equal to $N$, while for odd $N$, the ranks of the
matrices $S^z$ are equal to $N-1$ is due to the presence of a zero
eigenvalue corresponding to the eigenvector $P_k$, where
$k=s=\tfrac{1}{2}(N-1)$.

{\bf \emph{Property 2}}. The quadratic form corresponding to the
Killing form in $\mathfrak{su}(N)$ for the matrix $S^z$ is
constant and may be defined as

\begin{equation}\label{Ssquare}
 \langle\,S^z,\,S^z\,\rangle \ = \ \frac{1}{N}\,\mbox{tr}(S^z\cdot S^z) \ = \ \frac{N^2-1}{12} \ = \ \frac{s(s+1)}{3},
\end{equation}
which corresponds to the expected value for the length of one
(say $z$) component of a quantum-mechanical spin vector.

{\bf\emph{Property 3}}. The spin matrix $S^z$ satisfies an
algebraic condition determined by the characteristic polynomial
corresponding to the eigenvalue problem
\begin{equation}\label{EES}
  (\, S^z \ - \ \lambda_j\,\mathbb{I}_N  \,)\,P_j \ =\  \mathbf{0}  \ .
\end{equation}
has the form
\begin{equation}\label{CPo}
  F(S^z)\ \equiv \ \prod_{k=0}^{N-1}(\, S^z \ -  \ s\,I_N  \,) \ = \ \mathbf{0}  \ ,
\end{equation}
where $I_N$ is the $N\times N$ identity matrix. As all eigenvalues are different, \eqref{CPo} is the minimal polynomial.

{\bf\emph{Property 4}}. According to our earlier result (eq. (47)
of \cite{GG-stack}) if rank-1 projectors $P_k$ satisfy the E-L
equations \eqref{E-L}, then any linear combination of these
projectors is also a solution of \eqref{E-L} (this fact is
nontrivial, because the E-L equations are nonlinear). Hence the
spin matrices $S^z$ \eqref{SpinFromP} also satisfy the same E-L
equations \eqref{E-L}, except that the constraint is \eqref{CPo}
(instead of $P^2=P$).

{\bf\emph{Property 5}}. It follows from Property 4 that the spin
matrices $S^z$ are conditional stationary points of the same
action integral \eqref{action} as the projectors, but the
condition is \eqref{CPo} (rather than $P^2=P$).

\subsection{$S^z$ matrices as spins}

A 3-dimensional (3D) basis for spin matrices describing a system
of spin $s=(N-1)/2$ are three $N\times N$ generalized Pauli
Hermitian matrices whose elements read \cite{Merz}
\bea\label{genPauli}
&&\left(\sigma^x\right)_{mn}\ =\ (\delta_{m,n+1}+\delta_{m+1,n})\sqrt{s(m+n+1)-m\,n}\nn\\
&&\left(\sigma^y\right)_{mn}\ =\ i\,(\delta_{m,n+1}-\delta_{m+1,n})\sqrt{s(m+n+1)-m\,n}\nn\\
&&\left(\sigma^z\right)_{mn}\ =\ 2\,(s-m)\delta_{mn},
\eea
where $0\le m,n\le N-1$
These matrices generate an irreducible representation of $\mathfrak{su}(2)$ in $\mathbb{C}^N$.

We know from the proof of \cite{CG} that a special role is played
by the solutions of the $\mathbb{C}P^{2s}$ sigma models which stem
from the Veronese sequence of holomorphic functions
\be\label{holoVero}
f_0(\xi,\bxi)=\sum_{j=0}^{2s}\binom{2s}{j}^{1/2}\xi^j.
\ee
The functions $f_0$ \eqref{holoVero} and $f_k,~k=1,...,2s$,
obtained from $f_0$ by the recurrence formulae \eqref{f-rec},
define the corresponding rank-1 projectors by means of
\eqref{projectors}. The spin matrix $S^z$ obtained from these
projectors \eqref{SpinFromP} is tridiagonal and can be uniquely
decomposed into a linear combination of matrices
$\sigma^x,\sigma^y, \sigma^z$ \eqref{genPauli}. These solutions
will be discussed in detail in the next section. Unfortunately,
not all spin matrices \eqref{SpinFromP} have such a property.\\
\textit{Example} A simple counterexample may be constructed from
the recurrence formulae applied e.g. to the following holomorphic
vector of a $\mathbb{C}P^{2s}$
\be
f_0(\xi)=(1,\xi,\xi^2,...\xi^{2s}).
\ee
The resulting spin matrix $S^z$ is obtained from the projectors
$P_k$ \eqref{projectors} as their linear combination
\eqref{SpinFromP} where the projectors follow from the recurrence
formulae \eqref{rec-proj} applied to $P_0=f_0\otimes
\fd_0/(\fd_0\c f_0)$. This matrix does not have to be tridiagonal,
whereas any combination of the diagonal $\sigma^z$ and tridiagonal
$\sigma^x$, $\sigma^y$ has to be tridiagonal as its components
are. In the simplest case of $\mathbb{C}P^2$, the $3\times 3
~~S^z$ matrix
has a nonzero element
\be\label{extra-trid}
(S^z)_{13}= - 3\xi\bxi^3 \left[\left(
                    \xi^2\bxi^2 + \xi\bxi + 1 \right)\left (\xi^2\bxi^2 + 4\xi
                    \bxi + 1 \right)\right]^{-1}.
\ee
Obviously, $(S^z)_{31}$ is also nonzero as its complex conjugate.
Hence the matrix is not tridiagonal.
\\
On the other hand, it is evident that any diagonalizable $N\times
N$ matrix having the proper eigenvalues is congruent to
$\sigma^z/2$ (and also to $\sigma^x/2$ or $\sigma^y/2$, by
different congruency transformations). The spin-like linear
combination of projectors \eqref{SpinFromP} is Hermitian, so it is
diagonalizable by a unitary matrix. Let $U$ be the unitary
diagonalizing matrix for $S^z$ \eqref{SpinFromP}, both $S^z$ and
$U$ being functions of $(\xi,\bxi)$. Then
\be\label{UnitTransf}
\text{(a)}~~\sigma^z=2\, U^{-1}\cdot S^z\cdot U,\qquad
\text{(b)}~~S^z= \half\, U\cdot \sigma^z\cdot U^{-1}.
\ee
If  $U$ acts on $\sigma^x$ and $\sigma^y$ in the same way, we
obtain three matrices which constitute a basis for another
irreducible representation of $\mathfrak{su}(2)$ in $\mathbb{C}^N$
(it is straightforward to show that they span a Lie subalgebra of
$\mathfrak{su}(N)$).

In the context of (\ref{UnitTransf}b), the whole dynamics of the
spin matrix lies in the unitary transformation $U$. On the other
hand, in some situations, we can analyze the dynamics of spin
matrices without referring to the transformation
\eqref{UnitTransf}.

Let us start with the matrices $S^z$ which are combinations of the
generalized Pauli matrices \eqref{genPauli}
\be\label{LinCom}
S^z(\xi,\bxi) =
\alpha^x(\xi,\bxi)\sigma^x+\alpha^y(\xi,\bxi)\sigma^y+\alpha^z(\xi,\bxi)\sigma^z,\quad
(\alpha^x)^2+(\alpha^y)^2+(\alpha^z)^2=1/4.
\ee
 The coefficients in \eqref{LinCom} are the coordinates of the
spin vector in the basis \eqref{genPauli}. For such matrices $S^z$
we have
\begin{proposition}
Let $\bm{\al}$ be a vector whose components are the
$\al^x,~\al^y,~\al^z$ of equation \eqref{LinCom}. Then it
satisfies the equation
\be\label{DynEq}
\bm{\alpha}\times \bm{\alpha}_{\xi\bxi}=0,
\ee
where $\times$ denotes the usual vector product in $\mathbb{C}^3$.
The vector  $\bm{\al}$ is subject to the normalization condition
\be\label{alpha-constraint}
4\,\bm{\al}\c \bm{\al}-1=0.
\ee
Equation \eqref{DynEq} is a counterpart of the E-L equations
\eqref{E-L} in terms of the vector $\bm{\al}$.
\end{proposition}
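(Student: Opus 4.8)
The plan is to translate everything into the language of the constant triple $\sigma^x,\sigma^y,\sigma^z$, reducing the statement to the Euler--Lagrange equation \eqref{E-L} for $S^z$ (available via Property 4) together with two purely algebraic facts about the generalized Pauli matrices \eqref{genPauli}, which I would record first. (i) The matrices $\sigma^x,\sigma^y,\sigma^z$ are linearly independent and close on a copy of $\mathfrak{su}(2)$ inside $\mathfrak{su}(N)$, with $[\sigma^a,\sigma^b]=2i\,\epsilon^{abc}\sigma^c$; this is the standard ladder-operator computation from \eqref{genPauli}, or equivalently the observation that $\sigma^a=2S^a$ for the usual spin-$s$ matrices $S^a$. (ii) They are orthogonal with respect to the trace form, $\tr(\sigma^a\sigma^b)=\tfrac{4}{3}\,N\,s(s+1)\,\delta^{ab}$: vanishing of the off-diagonal entries is forced because the invariant bilinear form carried by an irreducible representation is proportional to $\delta^{ab}$, and the constant is read off from $\tr\big((\sigma^z)^2\big)=\sum_{m=0}^{N-1}4(s-m)^2=\tfrac{4}{3}N s(s+1)$.

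With (i) in hand, substitute the ansatz \eqref{LinCom} into \eqref{E-L}. Since $\sigma^x,\sigma^y,\sigma^z$ are coordinate-independent, $\p\bp S^z=\sum_a(\p\bp\al^a)\sigma^a=\bm{\al}_{\xi\bxi}\c\bm{\sigma}$ with $\bm{\sigma}=(\sigma^x,\sigma^y,\sigma^z)$, so that
\[
[S^z,\,\p\bp S^z]=\sum_{a,b}\al^a\,\al^b_{\xi\bxi}\,[\sigma^a,\sigma^b]
=2i\sum_{a,b,c}\epsilon^{abc}\,\al^a\,\al^b_{\xi\bxi}\,\sigma^c
=2i\,\big(\bm{\al}\times\bm{\al}_{\xi\bxi}\big)\c\bm{\sigma}.
\]
The left-hand side vanishes by Property 4, and by the linear independence in (i) each component of $\bm{\al}\times\bm{\al}_{\xi\bxi}$ must vanish, which is precisely \eqref{DynEq}.

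For the normalization \eqref{alpha-constraint} I would compute $\tr(S^z\c S^z)$ two ways. Property 2 gives $\tr(S^z\c S^z)=N(N^2-1)/12=N s(s+1)/3$, whereas (ii) gives $\tr(S^z\c S^z)=\sum_{a,b}\al^a\al^b\,\tr(\sigma^a\sigma^b)=(\bm{\al}\c\bm{\al})\,\tfrac{4}{3}N s(s+1)$. Equating the two yields $4\,\bm{\al}\c\bm{\al}=1$, which is \eqref{alpha-constraint}; this also re-derives, from the spectral data of $S^z$, the constraint already displayed in \eqref{LinCom}.

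The only genuinely delicate point is the bookkeeping in (i)--(ii): one has to pin down once and for all the normalization of the commutators of \eqref{genPauli} and the proportionality constant in $\tr(\sigma^a\sigma^b)\propto\delta^{ab}$, because every numerical factor appearing in \eqref{DynEq} and \eqref{alpha-constraint} comes from there. Everything else is immediate --- in effect \eqref{DynEq} is the image of the Euler--Lagrange equation \eqref{E-L} under the linear isomorphism $\bm{\al}\mapsto\bm{\al}\c\bm{\sigma}$ from $\C^3$ onto the $\mathfrak{su}(2)$ triple spanned by $\sigma^x,\sigma^y,\sigma^z$, and no use is made of the recurrence \eqref{f-rec}, of the Veronese solutions \eqref{holoVero}, or of the integrability of the model.
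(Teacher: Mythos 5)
Your proposal is correct and follows essentially the same route as the paper: invoke Property 4 to get the Euler--Lagrange equation $[S^z,\p\bp S^z]=0$ for the spin matrix, substitute the expansion \eqref{LinCom}, and read off $\bm{\al}\times\bm{\al}_{\xi\bxi}=0$ from the $\mathfrak{su}(2)$ commutators and the linear independence of $\sigma^x,\sigma^y,\sigma^z$. You merely make explicit the bookkeeping ($[\sigma^a,\sigma^b]=2i\,\epsilon^{abc}\sigma^c$ and the trace orthogonality used to recover the normalization $4\,\bm{\al}\c\bm{\al}=1$, which the paper already builds into \eqref{LinCom}), so there is nothing to add.
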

\begin{proof}
According to Property 4, the spin matrices $S^z$ satisfy the E-L
equations \eqref{E-L}.
 Substituting
\eqref{LinCom} into those equations, we obtain the coordinates of
\eqref{DynEq} in the basis \eqref{genPauli}. \qed
\end{proof}
Equation \eqref{DynEq}, together with the constraint
\eqref{alpha-constraint} describe stationary states of the 2D
Heisenberg model (see Appendix).

The E-L equations for the $S^z$ matrices follow from the same
action integral \eqref{action} as the equations for the projectors
\eqref{E-L} determining the conditional stationary point of the
action integral \eqref{action} under the condition \eqref{CPo}.
Similarly, the spin-dynamic equations \eqref{DynEq} can be derived
as conditional stationary points of the action integral over the
Riemann sphere
\be\label{alpha-action}
\mathcal{A}_{\al} = \int_{S^2} d\xi\,
d\bxi\,\left[\bm{\alpha}_\xi\c\bm{\alpha}_{\bxi} -
\mu(\xi,\bxi)\left(4\,\bm{\al}\c \bm{\al}-1\right)\right].
\ee
The Lagrange multiplier $\mu=\mu^\dagger\in Aut(\mathbb{C})$ in
the action integral has been introduced
to comply with the constraint \eqref{alpha-constraint}.
Then under the variation of the action \eqref{alpha-action}, we
have the following
\begin{proposition}
The spin dynamic equations \eqref{DynEq} are defined by the
stationary points of the action integral \eqref{alpha-action}.
\end{proposition}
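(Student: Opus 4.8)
The plan is to carry out a direct variational calculation: treat $\bm{\alpha}=(\alpha^x,\alpha^y,\alpha^z)$ as the dynamical field, vary the action \eqref{alpha-action} with respect to $\bm{\alpha}$ and with respect to the Lagrange multiplier $\mu$, and show that the resulting Euler--Lagrange system is equivalent to \eqref{DynEq} together with \eqref{alpha-constraint}. The variation in $\mu$ is immediate: it reproduces the constraint $4\,\bm{\al}\c\bm{\al}-1=0$. The substance is in the variation with respect to $\bm{\alpha}$.

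First I would write $\delta\mathcal{A}_\al=\int_{S^2}d\xi\,d\bxi\,[\delta\bm{\al}_\xi\c\bm{\al}_{\bxi}+\bm{\al}_\xi\c\delta\bm{\al}_{\bxi}-8\mu\,\bm{\al}\c\delta\bm{\al}]$, then integrate the first two terms by parts on the Riemann sphere; since $S^2$ is closed there are no boundary contributions, and one obtains $\delta\mathcal{A}_\al=\int_{S^2}d\xi\,d\bxi\,\delta\bm{\al}\c(-2\bm{\al}_{\xi\bxi}-8\mu\,\bm{\al})$ (the factor $2$ coming from the symmetric combination of the two mixed-derivative terms). Stationarity for arbitrary $\delta\bm{\al}$ then gives the field equation $\bm{\al}_{\xi\bxi}=-4\mu\,\bm{\al}$, i.e. $\bm{\al}_{\xi\bxi}$ is proportional to $\bm{\al}$. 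Taking the cross product with $\bm{\al}$ annihilates the right-hand side, yielding precisely $\bm{\al}\times\bm{\al}_{\xi\bxi}=0$, which is \eqref{DynEq}. Conversely, given \eqref{DynEq} and \eqref{alpha-constraint}, the vector $\bm{\al}_{\xi\bxi}$ is parallel to $\bm{\al}$, so it can be written as $-4\mu\,\bm{\al}$ for a scalar function $\mu$; Hermiticity of $\mu$ follows because $\bm{\al}$ is real-valued up to the reality properties inherited from the Hermitian $S^z$, so the two formulations match. I would also note, to close the logical loop, that this $\mu$ is exactly the multiplier that must be reinstated in \eqref{alpha-action}, consistent with the way the constraint \eqref{CPo} enters the projector action in Property 5.

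The main obstacle I anticipate is bookkeeping rather than conceptual: one must be careful that the mixed complex derivative $\p_\xi\p_{\bxi}$ is the correct Laplacian-type operator here (so that integration by parts on $S^2$ is legitimate and symmetric in $\xi\leftrightarrow\bxi$), and that the normalization constraint is carried consistently through the variation so that the multiplier term contributes $-8\mu\,\bm{\al}$ and not some other multiple. A secondary point worth a sentence is the equivalence direction: \eqref{DynEq} alone is weaker than the full Euler--Lagrange equation $\bm{\al}_{\xi\bxi}\parallel\bm{\al}$, so one should state that the action's stationary points are characterized by the stronger equation, of which \eqref{DynEq} is the projection obtained after eliminating $\mu$ — exactly as $\eqref{DynEq}$ was derived from the E--L equations \eqref{E-L} in the proof of the first Proposition. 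Everything else is a routine cross-product identity computation of the type already invoked there.
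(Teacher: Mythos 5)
Your proposal is correct and follows essentially the same route as the paper: vary the action \eqref{alpha-action} with respect to $\bm{\al}$ (and $\mu$) to obtain $\bm{\al}_{\xi\bxi}\propto\bm{\al}$, then take the cross product with $\bm{\al}$ to recover \eqref{DynEq}; the sign in front of $\mu$ is just a convention for the multiplier. Your closing remark that \eqref{DynEq} is only the projection of the full Euler--Lagrange equation, with $\mu$ recoverable from the normalization, is precisely the content of the paper's Corollary following this Proposition.
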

\begin{proof}
The proof is straightforward if we take the Frechet derivative of
\eqref{alpha-action} with respect to $\bm{\al}(\xi,\bxi)$.  This
yields
\be\label{eq+cond}
\bm{\al}_{\xi\bxi}-4\,\mu\, \bm{\al}=0.
\ee
On the vector multiplication of both sides by $\bm{\al}$, we
obtain \eqref{DynEq}.\qed
\end{proof}
\begin{corollary}
It is evident from the above Proof that equation \eqref{DynEq} is
merely the necessary condition for $\bm{\al}$ to follow spin
dynamics. It has to be supplemented by the normalization condition
\eqref{alpha-constraint}. For the action \eqref{alpha-action}, we
can get the complete E-L equations by the scalar multiplication of
\eqref{eq+cond} with $\bm{\al}$, which allows us to calculate the
multiplier $\mu$ explicitly from the $\bm{\al}$-normalization
condition \eqref{alpha-constraint}, thus getting
$\mu=\bm{\al}\c\bm{\al}_{\xi\bxi}$. Substituting this value into
\eqref{eq+cond}, we obtain
\be\label{fullE-L}
 \left(\mathbb{I}_3-4\,\bm{\al}\otimes\bm{\al}\right)\c\bm{\al}_{\xi\bxi}=0,
\ee
where $\mathbb{I}_3$ is the 3D identity tensor.
\end{corollary}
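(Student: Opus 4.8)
The plan is to carry out the elimination of the Lagrange multiplier that is sketched in the statement, filling in the (purely algebraic) steps and flagging the one point that is conceptual rather than computational. First I would recall from the proof of Proposition~2 that varying the action \eqref{alpha-action} with respect to $\bm{\al}$ produced the Euler--Lagrange equation \eqref{eq+cond}, $\bm{\al}_{\xi\bxi}-4\,\mu\,\bm{\al}=0$, while varying it with respect to $\mu$ reproduces the normalization \eqref{alpha-constraint}; throughout, ``$\c$'' denotes the symmetric bilinear form $\bm a\c\bm b=\sum_i a_i b_i$ on $\mathbb{C}^3$ (the same one used in \eqref{LinCom}), not the Hermitian pairing, which is what makes the identities below hold verbatim.

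Then I would take the scalar product of \eqref{eq+cond} with $\bm{\al}$, obtaining $\bm{\al}\c\bm{\al}_{\xi\bxi}-4\,\mu\,(\bm{\al}\c\bm{\al})=0$; this is the single place where \eqref{alpha-constraint} enters, and inserting $4\,\bm{\al}\c\bm{\al}=1$ gives the multiplier explicitly as $\mu=\bm{\al}\c\bm{\al}_{\xi\bxi}$. (As a check, differentiating \eqref{alpha-constraint} once yields $\bm{\al}\c\bm{\al}_\xi=0$ and once more $\bm{\al}\c\bm{\al}_{\xi\bxi}=-\bm{\al}_\xi\c\bm{\al}_\bxi$, so one may equivalently write $\mu=-\bm{\al}_\xi\c\bm{\al}_\bxi$.) Substituting $\mu=\bm{\al}\c\bm{\al}_{\xi\bxi}$ back into \eqref{eq+cond} gives $\bm{\al}_{\xi\bxi}-4\,(\bm{\al}\c\bm{\al}_{\xi\bxi})\,\bm{\al}=0$, and rewriting the second term via $(\bm{\al}\otimes\bm{\al})\c\bm v=\bm{\al}\,(\bm{\al}\c\bm v)$ lets the two terms be collected into $\bigl(\mathbb{I}_3-4\,\bm{\al}\otimes\bm{\al}\bigr)\c\bm{\al}_{\xi\bxi}=0$, which is \eqref{fullE-L}. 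I would then add the remark that on the constraint surface $\mathbb{I}_3-4\,\bm{\al}\otimes\bm{\al}$ annihilates $\bm{\al}$ (since $4\,\bm{\al}\c\bm{\al}=1$), i.e.\ it is the ``$\c$''-orthogonal projector onto the complement of $\bm{\al}$, so that \eqref{fullE-L} is precisely \eqref{DynEq} with the coefficient of proportionality between $\bm{\al}_{\xi\bxi}$ and $\bm{\al}$ fixed.

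I do not expect any real obstacle: the computation is a one-line elimination of $\mu$, forced by the normalization. The only thing that deserves emphasis is logical --- that \eqref{alpha-constraint} is an independent equation (the $\mu$-variation of \eqref{alpha-action}) and is not implied by \eqref{eq+cond}, so it must be adjoined before $\mu$ is determinate; and, on the computational side, that ``$\c$'' must be kept bilinear rather than Hermitian so that the differentiation identities used for $\mu$ are valid as written.
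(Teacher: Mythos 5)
Your proposal is correct and follows exactly the route the paper itself takes: scalar multiplication of \eqref{eq+cond} by $\bm{\al}$, use of the normalization \eqref{alpha-constraint} to solve for $\mu=\bm{\al}\c\bm{\al}_{\xi\bxi}$, and back-substitution to obtain \eqref{fullE-L}. Your added observations (the equivalent form $\mu=-\bm{\al}_\xi\c\bm{\al}_{\bxi}$ and the projector interpretation of $\mathbb{I}_3-4\,\bm{\al}\otimes\bm{\al}$) are correct and consistent with the paper but not needed.
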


This result can be generalized to the matrices congruent to
$\sigma_x, \sigma_y,\sigma_z$ by a coordinate-independent unitary
transformation, namely
\begin{proposition}\label{prop3}
Let
\be
S^z = \bm{\al}\c\bm{s}:=\sum_{k\in\{x,y,z\}}\,\al^k s^k,
\ee
where $\al^k\in\mathbb{R}$, the Euclidean norm $|\bm{\al}|=1/2$,
while $s^k$ are congruent to $\sigma^k$ by a constant unitary
matrix $U$
\be
s^k = U\c \sigma^k\c U^{-1},~~k\in\{x,y,z\}.
\ee
Then if the commutator $[S^z,~(S^z)_{\xi\bxi}]$ vanishes, the vector
$\bm{\al}$ satisfies equation \eqref{DynEq}.
\end{proposition}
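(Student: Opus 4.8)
The plan is to reduce Proposition~\ref{prop3} to the case already settled in Proposition~1 (the Pauli-combination case) by exploiting the coordinate-independence of $U$. First I would introduce the conjugated matrix $T^z:=U^{-1}\c S^z\c U$. Since $s^k=U\c\sigma^k\c U^{-1}$ and $U$ is invertible, this gives at once $T^z=\sum_{k\in\{x,y,z\}}\al^k\sigma^k$, so that $T^z$ is a genuine linear combination of the generalized Pauli matrices \eqref{genPauli} with exactly the same coefficient vector $\bm{\al}$, and the hypothesis $|\bm{\al}|=1/2$ is precisely the constraint accompanying \eqref{LinCom}.

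Next, because $U$ does not depend on $(\xi,\bxi)$, differentiation commutes with the conjugation, $(S^z)_{\xi\bxi}=U\c(T^z)_{\xi\bxi}\c U^{-1}$, whence
\be
[S^z,~(S^z)_{\xi\bxi}]=U\c[T^z,~(T^z)_{\xi\bxi}]\c U^{-1}.
\ee
Since $U$ is nonsingular, the assumed vanishing of the left-hand commutator is equivalent to $[T^z,(T^z)_{\xi\bxi}]=0$, i.e. $T^z$ obeys the E-L equation \eqref{E-L}. Applying Proposition~1 to $T^z$ — equivalently, substituting $T^z=\al^x\sigma^x+\al^y\sigma^y+\al^z\sigma^z$ into \eqref{E-L}, using that the $\sigma^k$ are constant and satisfy the $\mathfrak{su}(2)$ commutation relations, and reading off the coefficients of the linearly independent matrices $\sigma^x,\sigma^y,\sigma^z$ — one obtains $\bm{\al}\times\bm{\al}_{\xi\bxi}=0$, which is \eqref{DynEq}.

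I do not expect a genuine obstacle: the whole argument is a conjugation trick followed by the computation already carried out for Proposition~1. The single point that must be used in an essential way — and that I would flag explicitly — is the coordinate-independence of $U$; were $U$ allowed to depend on $(\xi,\bxi)$, the mixed derivative $(S^z)_{\xi\bxi}$ would generate additional terms in $U_\xi,\,U_{\bxi}$ and the clean similarity $[S^z,(S^z)_{\xi\bxi}]=U[T^z,(T^z)_{\xi\bxi}]U^{-1}$ would fail. It is also worth remarking that, in contrast with the matrices \eqref{SpinFromP} built from sigma-model solutions, the $S^z$ considered here is not automatically a solution of \eqref{E-L}, which is exactly why the commutator condition appears as an explicit hypothesis rather than being inherited from Property~4.
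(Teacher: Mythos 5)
Your proof is correct and follows essentially the same route as the paper: the paper's one-line argument observes that all commutators $[s^k,s^m]$ are congruent to $[\sigma^k,\sigma^m]$ by the same constant $U$, which is exactly your conjugation of $S^z$ to $T^z=\sum_k\al^k\sigma^k$ followed by the coefficient computation from Proposition~1. Your explicit remarks on where coordinate-independence of $U$ enters, and on why the commutator condition must be assumed rather than inherited from Property~4, are accurate and consistent with the paper's own Remarks.
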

A simple proof follows from the fact that all commutators
$[s^k,\,s^m],~k, m \in \{x,y,z\}$ are congruent to
$[\sigma^k,\,\sigma^m]$ by the same transformation matrix $U$.
\begin{remark}
Proposition \ref{prop3} is trivial for $s=\frac12$ (i.e.
$N=2$) due to the isomorphism of $SU(2)$ with $SO(3)$, which makes
the transformation a rotation of the vector $\bm{\al}$ by a
constant angle. It is nontrivial for higher spins ($N>2$) as the
set of constant $U$ transformations is much richer.
\end{remark}
\begin{remark}
In the general case, the Proposition \ref{prop3} is not true if
the transformation matrix depends on the coordinates. In all the
proven cases, the $S^z$ matrix depends on $\xi,\bxi$ through the
coefficients $\al^x,\al^y,\al^z$. Only two of these coefficients
are algebraically independent (note the normalization condition
\eqref{alpha-constraint}). In general, the system \eqref{E-L} can
have more degrees of freedom.
\end{remark}

%
%

\section{Spins from the Veronese vectors}
The E-L equations \eqref{E-L} expressed in terms of the
homogeneous variables $f_k, ~k=0,...,2s$ have the form
\be
\label{E-Lf} \left(\mathbb{I}_{2s+1}-\frac{f_k\otimes
f_k^{\dagger}}{f_k^{\dagger}\cdot f_k}\right)\cdot\left[\p\bp
f_k-\frac{1}{f_k^{\dagger}\cdot f_k}\left((f_k^{\dagger}\cdot\bp
f_k)\p f_k+(f_k^{\dagger}\cdot\p f_k)\bp f_k\right)\right]=0,
\ee
One of the most useful holomorphic solutions (i.e. for $k=0$) is
given by
\begin{equation}\label{Veronese}
f_0=\left(1,\binom{2s}{1}^{1/2}\xi,\dots,\binom{2s}{r}^{1/2}\xi^r,\dots,\xi^{2s}\right)\in\mathbb{C}^{2s+1}\backslash\{\emptyset\}.
\end{equation}
Starting from this solution, a sequence of solutions for
$k=1,...,2s$ may be obtained from the recurrence relations
\eqref{f-rec}. The sequence $\{f_0,f_1,...,f_{2s}\}$ is called the
Veronese sequence \cite{Bolton}.

 In this section we
consider the $\mathbb{C}P^{2s}$ models in which the vectors
$f_0,...,f_{N-1}$ make a Veronese sequence. All these vectors
$f_0,..., f_{N-1}, ~~N=2s+1$, may be expressed in terms of the
Krawchouk orthogonal polynomials \cite{CG}.
\begin{equation}\label{f_k}
(f_k)_{j} = \frac{(2s)!}{(2s-k)!}\left(\frac{-\bxi}{
1+\xi\bxi}\right)^k \binom{2s}{j}^{1/2}\bxi^jK_j(k;p;2s), \qquad
0\leq k,j\leq 2s,
\end{equation}
with the stereographic projection variable $p$
\be
0<p = \frac{\xi_+\xi_-}{1+\xi_+\xi_-}<1.
\ee
Here $(f_k)_j$ is the jth component of the vector $f_k
\in\mathbb{C}^{2s+1}\setminus\{\emptyset\}$ and $K_j(k;p,2s)$ are
the Krawtchouk polynomials for which we use the convention that
for $k=0$
\be
K_j(0;p,2s)=1.
\ee
The Krawtchouk polynomials can be expressed in terms of the
hypergeometric functions \cite{Koornwinder}
\be\label{Kraw-hyper}
K_j(k;p, 2s) = 
_{2}\hspace{-1mm}F_{1}(-j, -k; -2s; 1/p),\quad 0\leq k\leq 2s.
\end{equation}
The element in the $i$-th row and $j$-th column of the rank-1
Hermitian projector $P_k$ as given by \eqref{projectors} has the
form \cite{CG}
\be
(P_k)_{ij} =
\binom{2s}{k}\frac{(\xi\bxi)^{k}}{(1+\xi\bxi)^{2s}}\xi^i\bxi^j\sqrt{\binom{2s}{i}\binom{2s}{j}}K_i(k)K_j(k),
\ee
where we have omitted the dependence of $K_i$ on $p$ and $2s$.

For the Veronese sequence solutions of the $\mathbb{C}P^{2s}$
models, the analytic recurrence relations can be replaced by
simpler algebraic ones. It is convenient to use the combinations
of the $x$ and $y$ components of the spin matrices $S^\pm = S^x\pm
i\,S^y$ and $\sigma^\pm=\sigma^x\pm i\,\sigma^y$, rather than the
components themselves. The spin matrix $S^z$ may be simply
represented by a combination of the diagonal $\sigma^z$ and
tridiagonal $\sigma^+, \sigma^-$, namely \cite{CG}
\be\label{Sz}
 S^z = \frac{1}{2(1+\xi\bxi)}\left[(\xi\bxi-1)\sigma^z-\xi\sigma^--\bxi\sigma^+\right].
\ee
The $\mathfrak{su}(2)$ commutation relations
\be\label{commut}
[S^z,~S^\pm] =\pm S^\pm\, ,\qquad  [S^+,~S^-]=2S^z\,
\ee
(identical to the relations satisfied by the respective
combinations $\sigma^\pm$), suggest the following form of the
components $S^+$ and $S^-$
\bea\label{Spm}
&&S^+ = \frac{1}{2(1+\xi\bxi)}\left(2\bxi\,\sigma^z+\bxi^2\,\sigma^+ -\sigma^-\right),\nn\\
&&S^- = (S^+)^\dagger =
\frac{1}{2(1+\xi\bxi)}\left(2\xi\,\sigma^z-\sigma^+
+\xi^2\,\sigma^- \right)
\eea
(our $\sigma^z$ and $\sigma^\pm$ are twice as large as those of
\cite{CG} to comply with their notion as the generalized Pauli
matrices \eqref{genPauli}).

 It is easy to check that the components of the spin
$S^z,~ S^\pm$ indeed satisfy the commutation relations
\eqref{commut}. Moreover \cite{CG}, $S^\pm$ play the role of the
creation and annihilation operators for $f_k$, namely
\bea
S^+ f_k &=& -(1+\xi\bxi)f_{k+1},\\
S^- f_k &=& \frac{k(k-1-2s)}{1+\xi\bxi}f_{k-1},
\eea
where by convention $f_{-1}=f_{2s+1}=\mathbf{0}$ (see \cite{CG}
for the proof ).

Similarly, we get the algebraic recurrence relations for the projectors, namely
\bea
P_{k+1} &=&
\frac{S^+P_kS^-}{\tr(S^+P_kS^-)},~~\text{for}~k=0,...,2s-1\nn\\
P_{k-1} &=& \frac{S^-P_kS^+}{\tr(S^-P_k
S^+)},~~\text{for}~k=1,...,2s.
\eea
Consequently, the algebraic recurrence relations for the immersion
functions $X_k$ satisfy the algebraic conditions
\begin{align}
&
X_{k+1}=X_k-i\left(\frac{S^+P_kS^-}{tr(S^+P_kS^-)}+P_k-\frac{2}{2s+1}\mathbb{I}_{2s+1}\right)
,\\ &
X_{k-1}=X_k+i\left(\frac{S^-P_kS^+}{tr(S^-P_kS^+)}+P_k-\frac{2}{2s+1}\mathbb{I}_{2s+1}\right)
.
\end{align}

 The algebraic
recurrence relations allow us to recursively construct the
Veronese sequence of solutions $f_k$ (or rank-1 projectors $P_k$)
from the holomorphic solution $f_0$ (or $P_0$) in a simpler way
than the analytic relations \eqref{f-rec}.

The spin matrices corresponding to the Veronese vectors look
particularly simple if we express them in terms of the spherical
coordinates of the vector $\bm{\al}$ from \eqref{LinCom}. Let
\be\label{spherical}
\al^z =\frac12 \cos\th, \quad \al^x=\frac12 \sin \th \cos\ph,\quad
\al^y=\frac12 \sin\th \sin\ph,
\ee
where, as usually, $0\le\th\le\pi, ~0\le\ph<2\pi$. Then a
straightforward calculation leads to
\be\label{Vero-sph}
\th = 2 \arctan|\xi|,\quad \ph =-\arg\xi~(\mathrm{mod}\,2\pi).
\ee
Thus the angle between the vector $\bm{\al}$ and the $z$-direction
depends on the modulus  $|\xi|$ only, while a change in the phase
of $\xi$ yields an identical rotation of the spin vector about the
$z$ axis.


\section{Possible applications}
The results presented here complement our previous studies
\cite{GG-conf,GG-rec,GG-inv,GGP,GG-stack} on the theory of 2D
Euclidean sigma models. The inclusion of the $\mathfrak{su}(2)$
spin-s representations may have a significant impact on many
problems with physical applications. The equations \eqref{DynEq}
describe stationary states of the 2D Heisenberg model (see
Appendix). The connection which we have found describes this model
of ferro-, antiferro- and ferrimagnets in terms of
$\mathbb{C}P^{2s}$ sigma models, interpreting it as their special
case. The complete integrability of the $\mathbb{C}P^{2s}$ sigma
models provides a useful tool for solving problems of these
magnetic materials, e.g. \cite{Levanyuk}. In particular, the
Veronese solutions of the $\mathbb{C}P^{2s}$ sigma models allow us
to explicitly build the corresponding spin fields.

Our description of the spin-$s$ matrix in terms of the
$\mathbb{C}P^{2s}$ projectors is an example of representing an
array of functions $\mathbb{C}$ into $\mathbb{C}$ as a finite sum
of orthogonal vectors. In the special case of the Veronese
solutions, the vectors were sequences of Krawtchouk polynomials.
Such a representation using the Fourier-Krawtchouk transformation
was recently introduced in \cite{Stob} to achieve a quantum
information processing in constant time. Moreover, this constant-time
signal-evolution analysis works on finite strings with arbitrary length \cite{Stob,Wolf}. The
transformation represents the transformed function limited to a
finite interval in terms of the Krawtchouk polynomials multiplying
the Fourier variable $\exp[-i \frac{\pi}{2}(l-k)],
~k,l\in\mathbb{Z}$. It has its 2D counterpart in splitting the
spin component $S^z$ into the rank-1 projectors proportional to
products of two Krawtchouk polynomials with $\xi\bxi$ in consecutive powers.
Our more general scheme encompasses such a possibility. It is very likely that a discretization (sampling), followed by an appropriate
transformations of this kind, may be suitable for efficient
quantum computations. Further it is promising for possible
applications to digital image processing, in medical image
reconstruction and recognition
\cite{Stob,Nielsen,Zellinger,Yap,Gautam}.
\begin{acknowledgements}
The research was supported in part by The NSERC of Canada
operating grant of one of the authors (AMG). AMER wishes to
acknowledge and thank the Centre de Recherche Math\'ematiques
(Universit\'e de Montr\'eal) and NSERC of Canada for the financial
support provided for his three-year visit to Montreal.
\end{acknowledgements}

%
\section*{Conflict of interest}
On behalf of all authors, the corresponding author states that
there is no conflict of interest.
%




\section*{Appendix}
Consider the 2D Heisenberg model consisting of spins
$\bm{s}_{km}$, situated in the positions
$(x_k,y_m),~k,m\in\mathbb{Z}$ of a rectangular lattice whose cell
size is $a\times b$, i.e. $x_k=k\,a,~y_m=m\,b$, $a,b,\in\mathbb{R^+}$. Its Hamiltonian is
given by
\be\label{HHam}
H=\sum_{k,m} \left[J_1 \bm{s}_{km}\c \bm{s}_{k+1m}+J_2
\bm{s}_{km}\c
\bm{s}_{km+l}+J_3(\bm{s}_{km}\c\bm{s}_{k+1m+1}+\bm{s}_{k+1m}\c\bm{s}_{km+1})\right],
\ee
where $J_1,J_2\in\mathbb{R}$, are the coupling constants along the
$x$ and $y$ directions, respectively, $J_3\in \mathbb{R}$ is the
coupling constant over the cell diagonals; the summation
encompasses all nodes of the lattice.

We go to the continuous limit by defining
$\bm{s}_{km}=\bm{\al(x_k,y_m)}$, and expanding the Hamiltonian to
second order in the lattice constants $a, b$. The first order
terms vanish due to the perpendicularity of the spins to their
first derivatives. The continuous Hamiltonian, up to a constant,
reads
\be\label{HH-2}
H =
a^2(J_1-J_2-2J_3){\bm{\al}_x}^2+b^2(-J_1+J_2-2J_3){\bm{\al}_y}^2+
(J_1+J_2+2J_3)\bm{\al}\c\left(a^2\bm{\al}_{xx}+b^2\bm{\al}_{yy}\right),
\ee
(where the subscripts $x,\,y$ denote differentiation).

The equations defining a conditional stationary point of the
Hamiltonian, with the Lagrange multiplier $\mu(x,y)$ corresponding
to the constraint ${\bm{\al}^2=const}$, read
\be\label{cond-min}
a^2\left(J_2+2J_3\right)\bm{\al}_{xx}+b^2\left(J_1+2J_3\right)\bm{\al}_{yy}-2\mu
\bm{\al}=0
\ee
The substitution
\be\label{subst}
\xi=\frac{x}{a\sqrt{2(J_2+2J_3)}}+\frac{i y}{b\sqrt{2(J_1+2J_3)}}
\ee
yields equation \eqref{eq+cond} and consequently the stationary 2D
Heisenberg equation \eqref{DynEq} with \eqref{alpha-constraint} or
\eqref{fullE-L}. \qed
%
%
\end{document}